\documentclass[letterpaper, 10 pt, conference]{ieeeconf}  

\IEEEoverridecommandlockouts                              
\overrideIEEEmargins

\usepackage{bm}
\usepackage{amsmath,amsthm}
\usepackage{graphicx}
\usepackage{amssymb}
\usepackage{epsfig}
\usepackage{float}
\usepackage{mathrsfs}
\usepackage{caption}
\usepackage{subcaption}
\usepackage{longtable}
\usepackage{xtab}
\usepackage{color}
\usepackage{ifthen}
\usepackage{cite}

\usepackage{setspace}

\newtheorem{rem}{Remark}

\newtheorem{thm}{Theorem}

\newtheorem{prop}{Proposition}

\newboolean{showcomments}
\setboolean{showcomments}{true}
\newcommand{\lina}[1]{  \ifthenelse{\boolean{showcomments}}
{ \textcolor{red}{(Lina says:  #1)}} {}  }
\newcommand{\masoud}[1]{  \ifthenelse{\boolean{showcomments}}
{ \textcolor{blue}{(Masoud says:  #1)}} {}  }

\title{\LARGE \bf Distributed Temperature Control via Geothermal Heat Pump Systems in Energy Efficient Buildings}

\author{Xuan Zhang, Wenbo Shi, Qinran Hu, Bin Yan, Ali Malkawi and Na Li 
\thanks{This work was supported by NSF ECCS 1548204 and 1608509, NSF CAREER 1553407, Harvard Center for Green Buildings and Cities, and NARI Group Corporation. X. Zhang, W. Shi, B. Yan and A. Malkawi are with Harvard Center for Green Buildings and Cities, 20 Sumner Road, Cambridge, MA 02138, USA (email: xuan\_zhang@g.harvard.edu, \{wshi, byan, amalkawi\}@gsd.harvard.edu). X. Zhang, Q. Hu and N. Li are with the School of Engineering and Applied Sciences, Harvard University, 29 Oxford Street, Cambridge, MA 02138, USA (email: qinranhu@g.harvard.edu, nali@seas.harvard.edu). \textit{Corresponding Author: X. Zhang.}}
}

\begin{document}

\allowdisplaybreaks

\maketitle
\thispagestyle{empty}
\pagestyle{empty}


\begin{abstract}
Geothermal Heat Pump (GHP) systems are heating and cooling systems that use the ground as the temperature exchange medium. GHP systems are becoming more and more popular in recent years due to their high efficiency. Conventional control schemes of GHP systems are mainly designed for buildings with a single thermal zone. For large buildings with multiple thermal zones, those control schemes either lose efficiency or become costly to implement requiring a lot of real-time measurement, communication and computation. In this paper, we focus on developing energy efficient control schemes for GHP systems in buildings with multiple zones. We present a thermal dynamic model of a building equipped with a GHP system for floor heating/cooling and formulate the GHP system control problem as a resource allocation problem with the objective to maximize user comfort in different zones and to minimize the building energy consumption. We then propose real-time distributed algorithms to solve the control problem. Our distributed multi-zone control algorithms are scalable and do not need to measure or predict any exogenous disturbances such as the outdoor temperature and indoor heat gains. Thus, it is easy to implement them in practice. Simulation results demonstrate the effectiveness of the proposed control schemes.
\end{abstract}

\begin{spacing}{0.96}

\section{Introduction}
According to an investigation by the United Nations, buildings are responsible for $40\%$ of energy consumption, $70\%$ of electricity consumption, and result in $30\%$ of greenhouse gas emission~\cite{UNEP09}. Roughly speaking, Heating Ventilation and Air Conditioning (HVAC) systems in buildings account for $40\%$ of the energy use~\cite{Yan13}. It is therefore necessary to make them more energy efficient for environmental sustainability.

In recent years, Geothermal Heat Pump (GHP) systems are becoming popular among different HVAC systems, due to their highly efficient use of energy, i.e., they can usually deliver more than $3$kWh of heat with $1$kWh of electricity~\cite{TahSR11ifac}. GHP systems are heating/cooling systems that use the ground as the temperature exchange medium. In winter, they transfer heat from the underground soil/water to buildings for heating, and vice versa in summer for cooling. Conventional control of the GHP system includes Proportional-Integral-Derivative (PID) control~\cite{YanPLT07} and centralized Model Predictive Control (MPC)~\cite{HalPMJ12,TahSRM12}. These methods are practically efficient for cases with only a single thermal zone, however, they become either less efficient or costly (due to the centralized operation with heavy burdens of sensing, communication and computation of MPC) for cases with multiple thermal zones for large buildings. Since modern buildings are usually large, complex and are with multiple zones, scalable and easy-implementing control schemes are undoubtedly needed for them if equipped with GHP systems for heating/cooling. 

This paper aims to develop real-time control schemes for GHP systems in typical multi-zone buildings. Specifically, we aim to design distributed algorithms to guide each controllable component to properly adapt their behavior such that system-wide objectives are achieved under given operating conditions. The emergence of distributed/decentralized control in network systems has been stimulated by smart sensing, communication, computing, and actuation technologies nowadays, e.g., in smart grids~\cite{ZhaP15,ZhaKMP15}, smart cities~\cite{ShiLXCG14,ShiLC16}, mobile robots~\cite{JadLM03}, and intelligent transportation systems~\cite{Wan10}. The advantages of distributed/decentralized control include: good scalability as the network grows; reduction of measurement, communication and computation compared with centralized control; privacy preserving. Thus, applying distributed/decentralized control to GHP system control and optimization is becoming an area of active research. Representative work includes, for example, distributed MPC~\cite{ChaMA10,TahSR11ifac}. However, distributed MPC still requires a large amount of sensing, communication and computation. In most cases, it needs good prediction of future disturbances, i.e., outdoor temperature, sunlight, indoor occupancy, etc., which may be hard to obtain in reality. Different from the work using MPC, the controllers designed here are based on solving steady-state optimization problems via gradient algorithms: they (i) are dynamic feedback controllers that can be implemented without measuring or predicting disturbances, (ii) are scalable with respect to building structures, (iii) satisfy the system operating constraints, and (iv) ensure system efficiency, reliability and user comfort.

The structure of this paper is as follows. In Section~\ref{se:setup}, we provide the detailed problem setup, including an introduction of the GHP system, a commonly used thermal dynamic model of the building network, and the optimization problem formulation. Since the original optimization problem is nonconvex, two different scenarios are considered in which the problem can be (approximated and) convexified: (i) the control inputs are only the water flow rates (Section~\ref{se:flowonly}), and (ii) the control inputs are both water flow rates and the heat pump supply temperature (Section~\ref{se:general}). For both scenarios, we use a modified primal-dual gradient method to design real-time distributed/decentralized control schemes. As a result, the thermal dynamics can be driven to equilibria which are the optimal solutions of those associated optimization problems. In Section~\ref{se:simulation}, two numerical examples are provided to illustrate the effectiveness of the designed control schemes, using a building with four adjacent zones. Finally, conclusions and future work are presented in Section~\ref{se:conclusion}.

\noindent\emph{Notation:} The positive projection of a function $h(y)$ on a variable $x\in[0,+\infty)$, $(h(y))_{x}^{+}$ is:
\begin{gather}
(h(y))_{x}^{+}=\left\{ \begin{array}{ll}
h(y) & \textrm{if $x>0$} \\
\max(0,h(y)) & \textrm{if $x=0$}
\end{array} \right.. \nonumber
\end{gather}




\section{Problem Setup}\label{se:setup}

\subsection{GHP system in buildings}
The schematic of a typical GHP system is illustrated in Figure~\ref{fig:1}, which consists of two hydronic and one refrigerant circuits, interconnected by two heat exchangers, i.e., an evaporator and a condenser~\cite{TahSR11ifac,TahSRM12}. In the following we take the heating mode case as an example to explain the working process of the GHP system according to~\cite{Tah12,HalPMJ12}, as the heating mode is more commonly used in practice.

\begin{figure}[!t]
\centering
\includegraphics[width=0.47\textwidth]{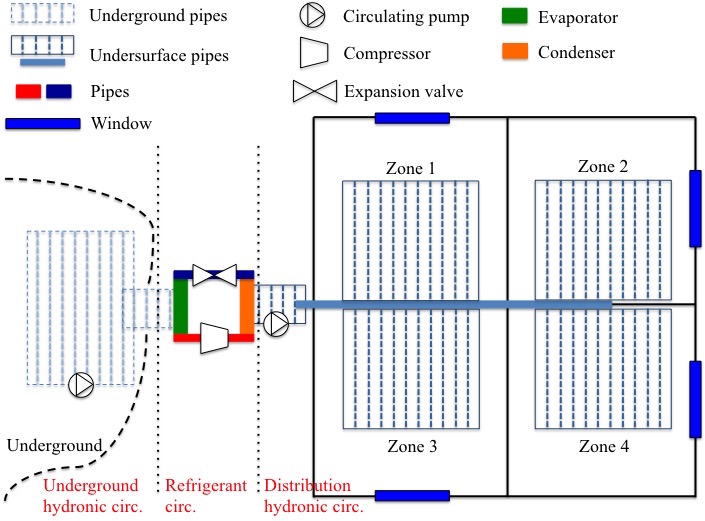}
\caption{Schematic of a typical GHP system.}
\label{fig:1}
\end{figure}

The underground hydronic circuit contains a mixture of water and anti-freeze driven by a small circulating pump, and the temperature of its underground buried brine-filled side is relatively constant with a seasonal pattern, i.e., warmer in winter and cooler in summer than the outdoor temperature. The liquid refrigerant in the refrigerant circuit/the \emph{heat pump}, first goes into the evaporator to absorb heat from the underground hydronic circuit and is converted to its gaseous state. Then this gaseous refrigerant passes through the compressor, and stops at the condenser in which it is converted to its liquid state to heat up the water in the distribution hydronic circuit. Finally this liquid refrigerant passes through the expansion valve, and stops at the evaporator for the next circulation. The distribution hydronic circuit is a grid of indoor under-surface pipes filled with water. Driven by another small circulating pump, this grid of pipes distributes heat to concrete floors (i.e., floor heating) or hydronic radiators (i.e., radiator heating) of a building for heating purpose. \emph{Here we only consider floor heating while radiator heating is similar and will be reported in a future paper}.

In general, the heat pump consumes electrical power to transfer heat to the water in the distribution hydronic circuit. The amount of this heat depends on the flow rate of the water, the heat pump supply/forward temperature, and the return water temperature~\cite{TahSMR11,TahSRM12}. Each zone in the building is equipped with a Thermal Wax Actuator (TWA) that adjusts the valve opening of the pipes for regulating the flow rate. The supply temperature is adjusted by regulating the compressor of the heat pump. The return water temperature can be approximated by the floor temperature which is accurate enough for control design~\cite{Tah12}. These facts will be used later in system modeling as well as control design.

\subsection{Thermal dynamic model with a GHP system}
According to the above configuration, we model a given building as an undirected connected graph $(\mathcal{N},\mathcal{E})$. Here $\mathcal{N}$ is the set of nodes representing zones/rooms, and $\mathcal{E}\subseteq\mathcal{N}\times\mathcal{N}$ is the set of edges. An edge $(i,j)\in\mathcal{E}$ means that zones $i$ and $j$ are neighbors. Let $\mathcal{N}(i)$ denote the set of neighboring zones of zone $i$. The thermal dynamics for each zone is described by a reduced Resistance-Capacitance (RC) model~\cite{LinMB12} (more discussion on this model is available in Remark 1):
\begin{gather}\label{equ:thermalmodel}
C_i\dot T_i=\frac{T^o-T_i}{R_i}+\sum_{j\in\mathcal{N}(i)}\frac{T_j-T_i}{R_{ij}}+\frac{T_{fi}-T_i}{R_{afi}}+Q_i
\end{gather}
where $i\in\mathcal{N}$, $C_i$ is the thermal capacitance, $T_i$ is the indoor temperature, $T^o$ is the outdoor temperature, $R_i$ is the thermal resistance of the wall and window separating zone $i$ and outside, $R_{ij}$ is the thermal resistance of the wall separating zones $i$ and $j$, $T_{fi}$ is the floor temperature, $R_{afi}$ is the thermal resistance between the indoor air and the floor, and $Q_i\ge0$ is the heat gain/disturbance from exogenous sources (e.g., user activity, solar radiation and device operation).

The thermal dynamics of floors equipped with a GHP system is described by a simplified lumped element model~\cite{TahSRM12}:
\begin{subequations}\label{equ:floormodel}
\begin{gather}
C_{fi}\dot T_{fi}=\frac{T_i-T_{fi}}{R_{afi}}+\frac{T_{wi}-T_{fi}}{R_{fwi}}, \text{ }i\in\mathcal{N} \\
C_{wi}\dot T_{wi}=\frac{T_{fi}-T_{wi}}{R_{fwi}}+c_wq_i(T_s-T_{fi}), \text{ }i\in\mathcal{N}
\end{gather}
\end{subequations}
where $C_{fi}$ is the thermal capacitance of the floor, $T_{wi}$ is the temperature of the water in pipes, $R_{fwi}$ is the thermal resistance between the floor and the water, $C_{wi}$ is the thermal capacitance of the water, $c_w$ is the specific heat of the water, $q_i$ is the flow rate of the water, and $T_s$ is the supply temperature of the heat pump. Note that (i) $T_s$ is a common variable of the whole building~\cite{TahSRM12}, and (ii) the term $c_wq_i(T_s-T_{fi})$ stands for the heat transfer from the heat pump to the water in undersurface pipes~\cite{TahSMR11}.

\begin{prop}
When the GHP combined with floor heating/cooling system is off ($q_i=0$),~(\ref{equ:thermalmodel})-(\ref{equ:floormodel}) asymptotically converges to an equilibrium point which is uniquely determined by disturbances $T^o, Q_i$. When the GHP combined with floor heating/cooling system is on, the asymptotic convergence property of~(\ref{equ:thermalmodel})-(\ref{equ:floormodel}) remains and the steady state is uniquely determined by disturbances $T^o, Q_i$ and control inputs $q_i, T_s$.
\end{prop}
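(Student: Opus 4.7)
The plan is to treat (1)–(2) as a linear time-invariant system with constant inputs and prove existence/uniqueness of an equilibrium separately from global convergence. Writing $x=(T_i,T_{fi},T_{wi})_{i\in\mathcal{N}}$ and $M=\mathrm{diag}(C_i,C_{fi},C_{wi})\succ 0$, the dynamics take the form $M\dot x = Ax + b$ with $A,b$ constant under fixed $T^o,Q_i$ (and fixed $q_i,T_s$ when on). Uniqueness of the equilibrium reduces to invertibility of $A$, which I would obtain constructively: eliminate $T_{wi}$ from the water balance to express it as an affine function of $T_{fi}$ and $T_s$; eliminate $T_{fi}$ from the floor balance to express it as a convex combination of $T_i$ and $T_s$; substitute into the zone balance. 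The result is a reduced linear system $\hat L\,T^*=d$ where $\hat L$ is symmetric with off-diagonal $-1/R_{ij}$ and diagonal $1/R_i+\sum_{j\in\mathcal{N}(i)} 1/R_{ij}+\mu_i$ (with $\mu_i=0$ in the off case and $\mu_i>0$ in the on case). Since every row has a strictly positive grounding term $1/R_i$, $\hat L\succ 0$, so $T^*$ is unique and back-substitution gives unique $T_{fi}^*,T_{wi}^*$.

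For asymptotic convergence I would use a weighted-energy Lyapunov function
\begin{equation*}
V(\tilde x)=\tfrac12\sum_{i\in\mathcal{N}}\bigl[C_i\tilde T_i^{\,2}+C_{fi}\tilde T_{fi}^{\,2}+a_i C_{wi}\tilde T_{wi}^{\,2}\bigr],
\end{equation*}
with $\tilde x=x-x^*$ and zone-local weights $a_i>0$. Substituting the error dynamics and telescoping the symmetric conduction terms edge by edge gives
\begin{equation*}
\dot V=-\sum_i\tfrac{\tilde T_i^{\,2}}{R_i}-\sum_{(i,j)\in\mathcal{E}}\tfrac{(\tilde T_i-\tilde T_j)^2}{R_{ij}}-\sum_i\tfrac{(\tilde T_i-\tilde T_{fi})^2}{R_{afi}}+\sum_i Q_i(\tilde T_{fi},\tilde T_{wi}),
\end{equation*}
where $Q_i$ is a $2\times 2$ quadratic form collecting the floor–water interaction together with the convective cross term $-a_i c_w q_i\tilde T_{fi}\tilde T_{wi}$ coming from $c_w q_i(T_s-T_{fi})$; this is the only asymmetry in the problem. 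In the off case ($q_i=0$), choosing $a_i=1$ collapses $Q_i$ to $-(\tilde T_{fi}-\tilde T_{wi})^2/R_{fwi}$, so $\dot V\le 0$ and a short LaSalle/strictness argument using the boundary term $-\tilde T_i^{\,2}/R_i$ propagates $\tilde T_i=0\Rightarrow\tilde T_{fi}=0\Rightarrow\tilde T_{wi}=0$. In the on case ($q_i>0$), a discriminant check shows $Q_i$ is negative definite for any $a_i$ in the nonempty interval $\bigl(1/(1+\sqrt{\beta_i})^2,\,1/(1-\sqrt{\beta_i})^2\bigr)$ (or $a_i>1/4$ when $\beta_i=1$), with $\beta_i=c_w q_i R_{fwi}$; for zones left with $q_i=0$ we keep $a_i=1$. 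Either way, $\dot V<0$ for $\tilde x\neq 0$, so $x(t)\to x^*$ globally.

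The main obstacle is the asymmetric convection in the on case: a single uniform quadratic weight does not yield a sign-definite $\dot V$, which forces the introduction of per-zone weights $a_i\neq 1$. The reason this succeeds is structural: $a_i$ multiplies only the water row of zone $i$, which interacts with no state outside zone $i$, so the edge-wise telescoping of the symmetric conduction part of $\dot V$ is preserved verbatim and the only constraint on $a_i$ is the local discriminant condition above, solvable zone by zone and independently of the other $q_j$. Once $\dot V<0$ is established, $A$ is automatically Hurwitz, which closes the loop with the uniqueness argument from the first paragraph.
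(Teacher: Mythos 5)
Your proof is correct, and it follows the route the paper itself names: the paper's entire justification is a one-sentence sketch (``rearrange in state-space representation and show the system matrix is Hurwitz; an alternative way is to construct a quadratic Lyapunov function''), and you carry out the Lyapunov alternative in full, plus an explicit equilibrium-uniqueness argument. The place where your elaboration adds genuine content beyond that sketch is the handling of the convection term $c_wq_i(T_s-T_{fi})$: because it injects $T_{fi}$ (not $T_{wi}$) into the water equation, the error dynamics are not symmetrizable by the capacitance weighting alone, and the unweighted energy $\tfrac12\sum_i(C_i\tilde T_i^2+C_{fi}\tilde T_{fi}^2+C_{wi}\tilde T_{wi}^2)$ yields a sign-definite $\dot V$ only when $\beta_i=c_wq_iR_{fwi}<4$ (which happens to hold for the paper's numerical parameters, where $\beta_i\le 0.94$, but is not guaranteed in general). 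Your per-zone weights $a_i$ close this gap for arbitrary flow rates; I verified that the admissible interval is indeed $\bigl(1/(1+\sqrt{\beta_i})^2,\,1/(1-\sqrt{\beta_i})^2\bigr)$, that it always contains a valid choice, and that the resulting $\dot V$ vanishes only at the origin (the chain $\tilde T_i=0\Rightarrow\tilde T_{fi}=0\Rightarrow\tilde T_{wi}=0$ makes the LaSalle step unnecessary in the off case as well). The uniqueness argument via the reduced grounded Laplacian $\hat L\succ0$ is also correct. Two cosmetic points: you reuse the symbol $Q_i$ both for the exogenous heat gain and for your floor--water quadratic form, which should be renamed; and it is worth stating explicitly that $\dot V<0$ for the linear error system with a positive definite quadratic $V$ is what delivers the Hurwitz property the paper asserts.
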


The above proposition can be directly derived by rearranging~(\ref{equ:thermalmodel})-(\ref{equ:floormodel}) in state-space representation, and showing that the system matrix is Hurwitz (an alternative way is to construct a quadratic Lyapunov function). So the desiderata is to design $q_i, T_s$ only for periods when the GHP combined with floor heating/cooling system is on, more specifically, is to design the dynamics of $q_i, T_s$ to drive~(\ref{equ:thermalmodel})-(\ref{equ:floormodel}) to some desired state.

\subsection{The optimization problem}
In reality, each zone has a desired temperature which is the set point determined by users. The control objective considered in this paper is to regulate the temperature to be close to the set point in each zone, and to minimize the total energy consumption of the GHP system. More specifically, we consider the following steady-state optimization problem:
\begin{subequations}\label{equ:opt}
\begin{gather}
\hspace{-0.1cm}\min_{Z_i,q_i,T_s, Z_{fi}} \sum_{i\in\mathcal{N}}\left[\frac{1}{2}r_i(Z_i-T_i^{set})^2+s\frac{c_wq_i|T_s-Z_{fi}|}{-aT_s+b}\right] \\
\text{s. t. }\frac{T^o-Z_i}{R_i}+\sum_{j\in\mathcal{N}(i)}\frac{Z_j-Z_i}{R_{ij}}+\frac{Z_{fi}-Z_i}{R_{afi}}+Q_i=0 \\
\frac{Z_i-Z_{fi}}{R_{afi}}+c_wq_i(T_s-Z_{fi})=0 \\
0\le q_i\le q_i^{max} \\
T_s^{min}\le T_s\le T_s^{max}
\end{gather}
\end{subequations}
where $i\in\mathcal{N}$ in~(\ref{equ:opt}b)-(\ref{equ:opt}d), $r_i$ and $s$ are positive weight coefficients, $T_i^{set}$ is the temperature set point, $a,b$ are positive coefficients in the Coefficient of Performance (COP, i.e., an indicator of the relationship between the produced heat and consumed electricity by the heat pump~\cite{TahSMR11}), $[0,q_i^{max}]$ is the range of $q_i$, and $[T_s^{min},T_s^{max}]$ is the range of $T_s$. Note that (i) to avoid confusion between steady-state values and temperature dynamics, we use $Z_i, Z_{fi}$ to denote steady-state temperature values whereas $T_i,T_{fi}$ are temperatures in the dynamic model~(\ref{equ:thermalmodel})-(\ref{equ:floormodel}), (ii) $T^o$ and $Q_i$ are exogenous disturbances, and (iii) we have merged the steady-state equations from~(\ref{equ:floormodel}) to obtain~(\ref{equ:opt}c). We assume that problem~(\ref{equ:opt}) is feasible and satisfies Slater's condition~\cite{BoyV04}. Moreover, we have four important remarks.

\noindent$\bullet$ In the objective function~(\ref{equ:opt}a), the term relating to the total energy/electricity consumption (weighted by $s$) is given by $\sum_{i\in\mathcal{N}}\frac{c_wq_i|T_s-Z_{fi}|}{-aT_s+b}$: the term $\sum_{i\in\mathcal{N}}c_wq_i|T_s-Z_{fi}|$ stands for the total heat exchange between the heat pump and the water in pipes; $-aT_s+b>0$ is the COP which has been approximated as a linear function of $T_s$ as in~\cite{TahSMR11}; $a,b$ can usually be obtained from the heat pump data sheet~\cite{TahSMR11,TahSRM12}.

\noindent$\bullet$ Parameters $r_i, s$ are determined by users. If users prefer more comfort, they can increase $r_i$ and decrease $s$, and vice versa. Because of this flexibility, we do not impose constraints on the temperature comfortable range.

\noindent$\bullet$ In the heating mode, $T_s>Z_{fi} (\text{or }T_{fi}), \forall i$ hold; in the cooling mode, $T_s<Z_{fi} (\text{or }T_{fi}), \forall i$ hold. This is usually true in practice. For example~\cite{TahSMR11,TahSRM12}, in the heating mode, $T_s>27^{\circ}C$ while $T_{fi}<26^{\circ}C$. \emph{Once the mode is determined, the sign of $T_s-Z_{fi} (\text{or }T_s-T_{fi})$ is determined.}

\noindent$\bullet$ The inequality constraints~(\ref{equ:opt}d)-(\ref{equ:opt}e) are in accord with those in the optimization problem (7) in~\cite{TahSRM12}.

To conclude, the goal is to design the regulating rule for $q_i, T_s$ so that system~(\ref{equ:thermalmodel})-(\ref{equ:floormodel}) can be driven to an equilibrium point which is the optimal solution to problem~(\ref{equ:opt}).

\begin{rem}{\hspace{-0.1cm}\cite{ZhaSL17}}
Though system~(\ref{equ:thermalmodel}) is a 1st-order RC model, using higher order RC models does not affect the formulation of~(\ref{equ:opt}) since it is a steady-state optimization problem. For example, for the 2nd-order model in~\cite{HaoLKS15,LinMB12}
\begin{gather}
C_i\dot T_i=\frac{T^o-T_i}{R_i}+\sum_{j\in\mathcal{N}(i)}\frac{T_{ij}-T_i}{R_{ij}}+\frac{T_{fi}-T_i}{R_{afi}}+Q_i \nonumber\\
C_{ij}\dot T_{ij}=\frac{T_i-T_{ij}}{R_{ij}}+\frac{T_j-T_{ij}}{R_{ij}} \nonumber
\end{gather}
where $T_{ij}$ is the temperature of the wall separating zones $i$ and $j$, and $C_{ij}$ is the thermal capacitance of the wall, the corresponding steady-state Equation~(\ref{equ:opt}b) is given by
\begin{gather}
\frac{T^o-Z_i}{R_i}+\sum_{j\in\mathcal{N}(i)}\frac{Z_j-Z_i}{2R_{ij}}+\frac{Z_{fi}-Z_i}{R_{afi}}+Q_i=0 \nonumber
\end{gather}
which results from the steady-state equation $Z_{ij}=\frac{Z_i+Z_j}{2}$ ($Z_{ij}$ is the steady-state temperature of the wall), i.e., the steady-state equation of the higher order model can be reduced to~(\ref{equ:opt}b) by eliminating states of solids in the building envelope. Since our control design procedures proposed later are based on solving the steady-state optimization problem~(\ref{equ:opt}), using higher order models will not affect them.
\end{rem}

\section{Scenario I: Flow Rate Control Only}\label{se:flowonly}

\subsection{Problem reformulation}
In this section, we consider the water flow rate $q_i$ as the only control input to each zone, and regard $T_s$ as a known exogenous signal which satisfies constraint~(\ref{equ:opt}e). Such scenario could happen in practice, for instance, $T_s$ is required to track some prescribed curve~\cite{TahSR11ifac}. In this case, problem~(\ref{equ:opt}) can be simplified into
\begin{subequations}\label{equ:optnoTs}
\begin{gather}
\min_{Z_i,u_i, Z_{fi}} \sum_{i\in\mathcal{N}}\left[\frac{1}{2}r_i(Z_i-T_i^{set})^2+s\frac{|u_i|}{-aT_s+b}\right] \\
\text{s. t. }\frac{T^o-Z_i}{R_i}+\sum_{j\in\mathcal{N}(i)}\frac{Z_j-Z_i}{R_{ij}}+\frac{Z_{fi}-Z_i}{R_{afi}}+Q_i=0 \\
\frac{Z_i-Z_{fi}}{R_{afi}}+u_i=0 \\
0\le \frac{u_i}{c_w(T_s-Z_{fi})}\le q_i^{max}
\end{gather}
\end{subequations}
where $i\in\mathcal{N}$ in~(\ref{equ:optnoTs}b)-(\ref{equ:optnoTs}d), $u_i=c_wq_i(T_s-Z_{fi})$ is introduced to replace terms on $q_i$, and constraint~(\ref{equ:opt}e) is dropped (under these actions, problems~(\ref{equ:opt}) and~(\ref{equ:optnoTs}) are still equivalent). Note that the GHP system is in either the heating mode or the cooling mode. Once the mode is determined, the signs of $u_i$ and $T_s-Z_{fi}$ are determined so that (i) $|u_i|$ equals either $u_i$ or $-u_i$, and (ii) the inequality constraint~(\ref{equ:optnoTs}d) can become linear by multiplying $c_w(T_s-Z_{fi})$ on both sides. Thus, problem~(\ref{equ:optnoTs}) naturally becomes convex.

\subsection{A distributed algorithm}
Once the mode is determined, problem~(\ref{equ:optnoTs}) can be solved in either a centralized or distributed/decentralized way. Any centralized algorithm requires to measure the outdoor temperature $T^o$ and the indoor heat gain $Q_i$ in every zone ($T_s$ is given). Because these exogenous disturbances can fluctuate frequently and are not easy to obtain, the cost of centralized algorithms would be expensive. Next we develop a real-time distributed algorithm that does not need measurement of these exogenous disturbances.

\emph{Consider the heating mode case} (the cooling mode case is similar). The Lagrangian function of~(\ref{equ:optnoTs}) is given by
\begin{align}
L=&\sum_{i\in\mathcal{N}}\left[\frac{1}{2}r_i(Z_i-T_i^{set})^2+s\frac{u_i}{-aT_s+b}\right] \nonumber\\
&+\sum_{i\in\mathcal{N}}\zeta_i\Big(\frac{T^o-Z_i}{R_i}+\sum_{j\in\mathcal{N}(i)}\frac{Z_j-Z_i}{R_{ij}}+\frac{Z_{fi}-Z_i}{R_{afi}}+Q_i\Big) \nonumber\\
&+\sum_{i\in\mathcal{N}}\lambda_i\Big(\frac{Z_i-Z_{fi}}{R_{afi}}+u_i\Big)+\sum_{i\in\mathcal{N}}\mu_i^-(-u_i) \nonumber\\
&+\sum_{i\in\mathcal{N}}\mu_i^+(u_i-q_i^{max}c_w(T_s-Z_{fi})) \nonumber
\end{align}
where $\zeta_i, \lambda_i, \mu_i^+, \mu_i^-$ are the Lagrange multipliers/dual variables for constraints~(\ref{equ:optnoTs}b)-(\ref{equ:optnoTs}d). Since problem~(\ref{equ:optnoTs}) is convex, feasible and satisfies Slater's condition, the Karush-Kuhn-Tucker (KKT) conditions are necessary and sufficient conditions for optimality~\cite{BoyV04}, given by
\begin{subequations}\label{equ:kkt}
\begin{align}
\frac{\partial L}{\partial Z_i}=&r_i(Z_i-T_i^{set})-\zeta_i\Big(\frac{1}{R_i}+\sum_{j\in\mathcal{N}(i)}\frac{1}{R_{ij}}+\frac{1}{R_{afi}}\Big) \nonumber\\
&+\sum_{j\in\mathcal{N}(i)}\frac{\zeta_j}{R_{ij}}+\frac{\lambda_i}{R_{afi}}=0, \text{ }i\in\mathcal{N} \\
\frac{\partial L}{\partial u_i}=&\frac{s}{-aT_s+b}+\lambda_i+\mu_i^+-\mu_i^-=0, \text{ }i\in\mathcal{N} \\
\frac{\partial L}{\partial Z_{fi}}=&\frac{\zeta_i-\lambda_i}{R_{afi}}+\mu_i^+q_i^{max}c_w=0, \text{ }i\in\mathcal{N} \\
\frac{\partial L}{\partial\zeta_i}=&\Big(\frac{T^o-Z_i}{R_i}+\sum_{j\in\mathcal{N}(i)}\frac{Z_j-Z_i}{R_{ij}}+\frac{Z_{fi}-Z_i}{R_{afi}}+Q_i\Big) \nonumber\\
=&0, \text{ }i\in\mathcal{N} \\
\frac{\partial L}{\partial\lambda_i}=&\Big(\frac{Z_i-Z_{fi}}{R_{afi}}+u_i\Big)=0, \text{ }i\in\mathcal{N} \\
\mu_i^+&(u_i-q_i^{max}c_w(T_s-Z_{fi}))=0 \nonumber\\
\mu_i^+&\ge0, u_i-q_i^{max}c_w(T_s-Z_{fi})\le0, \text{ }i\in\mathcal{N} \\
\mu_i^-&(-u_i)=0, \text{ }\mu_i^-\ge0, -u_i\le0, \text{ }i\in\mathcal{N}.
\end{align}
\end{subequations}

Motivated by a modified primal-dual gradient method~\cite{ZhaP15tac,ZhaLP15cdc}, we design the following algorithm to solve~(\ref{equ:optnoTs}):
\begin{subequations}\label{equ:controller}
\begin{align}
\dot Z_i=&-k_{Z_i}\Big(\frac{\partial L}{\partial Z_i}\Big)=k_{Z_i}\Big(r_i(T_i^{set}-Z_i)+\zeta_i\Big(\frac{1}{R_i}+\frac{1}{R_{afi}} \nonumber\\
&+\sum_{j\in\mathcal{N}(i)}\frac{1}{R_{ij}}\Big)-\sum_{j\in\mathcal{N}(i)}\frac{\zeta_j}{R_{ij}}-\frac{\lambda_i}{R_{afi}} \Big) \\
\dot u_i=&-k_{u_i}\Big(\frac{\partial L}{\partial u_i}+k_{eu_i}(u_i-\hat u_i)\Big)=k_{u_i}\Big(\frac{s}{aT_s-b}-\lambda_i \nonumber\\
&-\mu_i^++\mu_i^-+k_{eu_i}(\hat u_i-u_i) \Big) \\
\dot{\hat u}_i=&\hat k_{eu_i}(u_i-\hat u_i) \\
\dot Z_{fi}=&-k_{Z_{fi}}\Big(\frac{\partial L}{\partial Z_{fi}}+k_{eZ_{fi}}(Z_{fi}-\hat Z_{fi})\Big)=k_{Z_{fi}}\Big(\frac{\lambda_i-\zeta_i}{R_{afi}} \nonumber\\
&-\mu_i^+q_i^{max}c_w+k_{eZ_{fi}}(\hat Z_{fi}-Z_{fi}) \Big) \\
\dot{\hat Z}_{fi}=&\hat k_{eZ_{fi}}(Z_{fi}-\hat Z_{fi}) \\
\dot\zeta_i=&k_{\zeta_i}\Big(\frac{\partial L}{\partial \zeta_i}\Big)=k_{\zeta_i}\Big(\frac{T^o-Z_i}{R_i}+\sum_{j\in\mathcal{N}(i)}\frac{Z_j-Z_i}{R_{ij}} \nonumber\\
&+\frac{Z_{fi}-Z_i}{R_{afi}}+Q_i\Big) \\
\dot\lambda_i=&k_{\lambda_i}\Big(\frac{\partial L}{\partial \lambda_i}\Big)=k_{\lambda_i}\Big(\frac{Z_i-Z_{fi}}{R_{afi}}+u_i\Big) \\
\dot\mu_i^+=&k_{\mu_i^+}\Big(\frac{\partial L}{\partial \mu_i^+}\Big)_{\mu_i^+}^+=k_{\mu_i^+}(u_i-q_i^{max}c_w(T_s-Z_{fi}))_{\mu_i^+}^+ \\
\dot\mu_i^-=&k_{\mu_i^-}\Big(\frac{\partial L}{\partial \mu_i^-}\Big)_{\mu_i^-}^+=k_{\mu_i^-}(-u_i)_{\mu_i^-}^+
\end{align}
\end{subequations}
where $i\in\mathcal{N}$, $k_{Z_i}, k_{u_i}, k_{eu_i}, \hat k_{eu_i}, k_{Z_{fi}}, k_{eZ_{fi}}, \hat k_{eZ_{fi}}, k_{\zeta_i}, k_{\lambda_i},$ $k_{\mu_i^+}, k_{\mu_i^-}$ are positive scalars representing the controller gains, and we have introduced the auxiliary states $\hat u_i, \hat Z_{fi}$: since the objective function~(\ref{equ:optnoTs}a) is not strictly convex in $u_i, Z_{fi}$, a standard primal-dual gradient method~\cite{FeiP10} could yield large oscillations; after introducing the extra dynamics, the transient behavior of the overall system can be improved (demonstrated in Section~\ref{se:simulation}). Note that $T_i, T_{fi}$ have their own dynamics given by~(\ref{equ:thermalmodel})-(\ref{equ:floormodel}) and thus can not be designed, which is why we replace $T_i, T_{fi}$ with $Z_i, Z_{fi}$ initially, i.e., $Z_i, Z_{fi}, i\in\mathcal{N}$ are ancillary state variables. According to~\cite{ZhaLP15cdc,CheMC16}, it is true that~(\ref{equ:controller}) asymptotically converges to an equilibrium point which is the optimal solution of~(\ref{equ:optnoTs}), since the optimization problem is convex and extra dynamics have been included in~(\ref{equ:controller}). Now using
\begin{gather}\label{equ:controllerqi}
q_i=\frac{u_i}{c_w(T_s-Z_{fi})}, \text{ }i\in\mathcal{N}
\end{gather}
as the control input to system~(\ref{equ:thermalmodel})-(\ref{equ:floormodel}), we can naturally obtain a real-time distributed controller to regulate~(\ref{equ:thermalmodel})-(\ref{equ:floormodel}) to a steady state which is the optimal solution to problem~(\ref{equ:optnoTs}) ((\ref{equ:optnoTs}) is equivalent to~(\ref{equ:opt}) under known $T_s$).

\begin{thm}\label{thm:1}
Given constant/step change/slow-varying $T^o,$ $Q_i, T_s$, the trajectory of system~(\ref{equ:thermalmodel})-(\ref{equ:floormodel}) and~(\ref{equ:controller})-(\ref{equ:controllerqi}) asymptotically converges to an equilibrium point at which $T_i, q_i, T_{fi}$ of the equilibrium point is the optimal solution of~(\ref{equ:opt}).
\end{thm}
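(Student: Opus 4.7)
The plan is to exploit a cascade structure in the closed-loop system. The controller dynamics~(\ref{equ:controller}) depend only on the ancillary variables $Z_i, u_i, \hat u_i, Z_{fi}, \hat Z_{fi}$ and the multipliers $\zeta_i, \lambda_i, \mu_i^+, \mu_i^-$ together with the (piecewise) constant exogenous quantities $T^o, Q_i, T_s$; they are completely decoupled from the physical states $T_i, T_{fi}, T_{wi}$ in~(\ref{equ:thermalmodel})--(\ref{equ:floormodel}). The map~(\ref{equ:controllerqi}) then injects the controller output into the downstream thermal block, so the convergence statement of Theorem~\ref{thm:1} splits naturally into an ``upstream'' and a ``downstream'' step.

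First I would analyze~(\ref{equ:controller}) in isolation. Because (after heating-mode selection) problem~(\ref{equ:optnoTs}) is convex and~(\ref{equ:controller}) is a modified primal--dual gradient flow with auxiliary states $\hat u_i, \hat Z_{fi}$ added precisely to compensate for the lack of strict convexity of the cost in $u_i$ and $Z_{fi}$, I would invoke the convergence results of the already cited modified primal--dual method~\cite{ZhaLP15cdc,CheMC16} to conclude that the controller state globally asymptotically converges to a point satisfying the KKT system~(\ref{equ:kkt}). By Slater's condition this point $(Z_i^*, u_i^*, Z_{fi}^*)$ together with its multipliers is optimal for~(\ref{equ:optnoTs}), and at equilibrium one also has $\hat u_i^* = u_i^*$ and $\hat Z_{fi}^* = Z_{fi}^*$. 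I then define $q_i^* = u_i^*/(c_w(T_s - Z_{fi}^*))$, which is well-defined and lies in $[0,q_i^{max}]$ because $T_s > Z_{fi}^*$ in heating mode and constraint~(\ref{equ:optnoTs}d) is enforced at optimality.

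Next I would handle the downstream thermal block. Proposition~1 already supplies a Hurwitz system matrix for~(\ref{equ:thermalmodel})--(\ref{equ:floormodel}) at any fixed $q_i$, hence global exponential stability of the associated LTI dynamics. Combining this with $q_i(t) \to q_i^*$ from the first step and the Lipschitz dependence of the vector field on $q_i$, the converging-input-converging-state property for exponentially stable LTI systems yields $T_i(t) \to T_i^{ss}$, $T_{fi}(t) \to T_{fi}^{ss}$, $T_{wi}(t) \to T_{wi}^{ss}$. Finally, I would match the steady-state equations of~(\ref{equ:thermalmodel})--(\ref{equ:floormodel}) at $q_i^*$ against the primal KKT conditions~(\ref{equ:kkt}d)--(\ref{equ:kkt}e) under the identification $T_i \leftrightarrow Z_i$, $T_{fi} \leftrightarrow Z_{fi}$: the two sets of equations coincide, so by the uniqueness asserted in Proposition~1 we obtain $T_i^{ss} = Z_i^*$ and $T_{fi}^{ss} = Z_{fi}^*$. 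This establishes that the full closed-loop equilibrium is the optimum of~(\ref{equ:optnoTs}), which is equivalent to~(\ref{equ:opt}) for the given $T_s$.

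The main obstacle is the convergence claim for~(\ref{equ:controller}) itself. Although the authors appeal to existing references, a self-contained proof is non-trivial because the objective is only convex---not strictly convex---in $u_i$ and $Z_{fi}$, so the textbook primal--dual Lyapunov argument does not apply directly. The remedy provided by the $\hat u_i, \hat Z_{fi}$ dynamics is what permits a composite Lyapunov function of the form $V = \tfrac{1}{2}\sum k_x^{-1}(x-x^*)^2$ summed over all primal, dual, and auxiliary variables; one then needs to verify $\dot V \le 0$, identify the equilibrium set via $\{\dot V = 0\}$, and invoke LaSalle's invariance principle, taking appropriate care with the positive-projection terms for $\mu_i^\pm$. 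A secondary subtlety is promoting the constant-disturbance analysis to the slow-varying $T^o, Q_i, T_s$ case mentioned in the statement, which can be handled either by a two-timescale/quasi-static argument or by upgrading the Lyapunov estimate to an ISS bound in terms of the disturbance derivatives.
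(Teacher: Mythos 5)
Your proposal follows essentially the same route as the paper's own proof: upstream convergence of the controller flow~(\ref{equ:controller}) to the optimizer of~(\ref{equ:optnoTs}) by appeal to the cited modified primal--dual convergence results, followed by convergence of the cascaded thermal block~(\ref{equ:thermalmodel})--(\ref{equ:floormodel}) and the identification $T_i=Z_i$, $T_{fi}=Z_{fi}$ via the uniqueness asserted in Proposition~1. The paper's argument is merely terser---it does not spell out the converging-input-converging-state step, the matching of the thermal steady-state equations with the primal KKT conditions, or the Lyapunov/LaSalle machinery you correctly flag as the substantive content hidden in the citations---but the decomposition and the key ingredients are identical.
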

\begin{proof}
According to~\cite{ZhaLP15cdc,CheMC16}, each trajectory of system~(\ref{equ:controller}) asymptotically converges to an equilibrium point which is the optimal solution of~(\ref{equ:optnoTs}). Under~(\ref{equ:controllerqi}), the resulting equilibrium point after state transformation from $u_i$ to $q_i$, is the optimal solution of~(\ref{equ:opt}). On the other hand, the trajectory of the overall system~(\ref{equ:thermalmodel})-(\ref{equ:floormodel}) and~(\ref{equ:controller})-(\ref{equ:controllerqi}) also asymptotically converges to an equilibrium point, due to the cascade nature, i.e.,~(\ref{equ:controller})-(\ref{equ:controllerqi})$\rightarrow$(\ref{equ:thermalmodel})-(\ref{equ:floormodel}). By Proposition 1, since the equilibrium point of~(\ref{equ:thermalmodel})-(\ref{equ:floormodel}) is uniquely determined by the inputs $T^o, Q_i, q_i, T_s$ in which $q_i$ is given by~(\ref{equ:controllerqi}), we have $T_i=Z_i, T_{fi}=Z_{fi}$ (here $T_i, T_{fi}$ are states given by~(\ref{equ:thermalmodel})-(\ref{equ:floormodel})) when the overall system reaches steady state, which completes the proof.
\end{proof}

This theorem requires $T^o, Q_i$ to be either constant, step change, or slow-varying, which holds in practice as they vary at a time-scale of minutes. Remark that our controller operates in real-time, i.e., at a time-scale of seconds.

In Equation~(\ref{equ:controller}f), the disturbances $T^o, Q_i$ appear. Motivated by~\cite{ZhaLP15cdc}, to make the algorithm implementable without measuring these terms, we introduce $\tilde\zeta_i=\frac{\zeta_i}{k_{\zeta_i}}-C_iT_i$ as
\begin{align}\label{equ:controllerplus}
\dot{\tilde\zeta}_i=&\frac{T_i-Z_i}{R_i}+\sum_{j\in\mathcal{N}(i)}\frac{T_i-Z_i-T_j+Z_j}{R_{ij}} \nonumber\\
&+\frac{T_i-Z_i-T_{fi}+Z_{fi}}{R_{afi}}, \text{ }i\in\mathcal{N}.
\end{align}
Moreover, we substitute $\zeta_i=k_{\zeta_i}(\tilde\zeta_i+C_iT_i)$ into~(\ref{equ:controller}a),~(\ref{equ:controller}d) to eliminate $\zeta_i$. Now the proposed control scheme~(\ref{equ:controller}a)-(\ref{equ:controller}e),~(\ref{equ:controller}g)-(\ref{equ:controller}i) and~(\ref{equ:controllerqi})-(\ref{equ:controllerplus}) is completely distributed as shown in Figure~\ref{fig:2} and can be implemented as follow. Given $T_s, C_i, R_i, R_{ij}, R_{afi}, r_i, s, a, b, q_i^{max}$, each zone in the building collects $T_i^{set}$ from users, locally measures its indoor temperature $T_i$ and floor temperature $T_{fi}$, receives the feedback signals $\zeta_j=k_{\zeta_j}(\tilde\zeta_j+C_jT_j)$ and $T_j-Z_j$ from its neighboring zones, and then uses the information to update $Z_i, u_i, \hat u_i, Z_{fi}, \hat Z_{fi}, \tilde\zeta_i,\lambda_i, \mu_i^+, \mu_i^-, q_i$. Here $C_i, R_i, R_{ij}, R_{afi}, a, b, q_i^{max}$ are building parameters, $r_i, s$ are parameters specified by users, and $T_s$ is a known signal.

\begin{figure}[!t]
\centering
\includegraphics[width=0.47\textwidth]{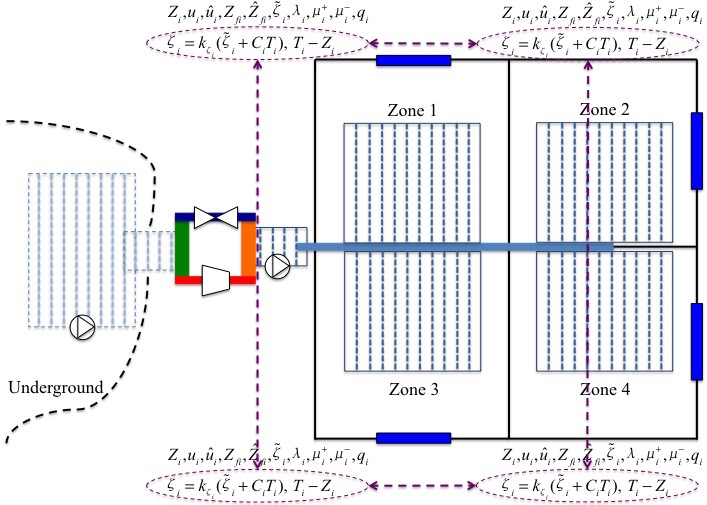}
\caption{Information exchange of the distributed controller.}
\label{fig:2}
\end{figure}

\begin{rem}
In reality, due to that $Z_i, Z_j$ (or $T_i, T_j$) of neighboring zones are often very close to each other and $R_{ij}$ is not small, the total heat gain/loss from neighboring zones is (sometimes much) less dominant compared with the heat gain/loss from the outside plus the indoor heat gain in every zone. Thus, the term $\sum_{j\in\mathcal{N}(i)}\frac{T_j-T_i}{R_{ij}}$ could be ignored in~(\ref{equ:thermalmodel}) as well as the term $\sum_{j\in\mathcal{N}(i)}\frac{Z_j-Z_i}{R_{ij}}$ in~(\ref{equ:opt})/(\ref{equ:optnoTs}). In this case, following the same design procedure, we end up with a completely decentralized control scheme (given by setting $R_{ij}=\infty$ in~(\ref{equ:controller}a) and~(\ref{equ:controllerplus})) that only needs local measurement.
\end{rem}

\section{Scenario II: The General Case}\label{se:general}

\subsection{Problem reformulation}
In this section, both $q_i$ and $T_s$ are considered as control inputs to the system. Instead of handling the nonconvex problem~(\ref{equ:opt}) directly, we focus on an approximate version of this problem, i.e., rather than minimizing the exact energy consumption in the objective function, we minimize one of its upper bound. This approximation can help convexify~(\ref{equ:opt}), which will become clear later. The approximate problem is
\begin{subequations}\label{equ:optapp}
\begin{gather}
\hspace{-0.4cm}\min_{Z_i,q_i,T_s, Z_{fi}} \sum_{i\in\mathcal{N}}\left[\frac{1}{2}r_i(Z_i-T_i^{set})^2+s\frac{c_w^2q_i^2(T_s-Z_{fi})^2}{-aT_s+b}\right] \\
\text{s. t. ~(\ref{equ:opt}b)-(\ref{equ:opt}e)}
\end{gather}
\end{subequations}
where we have modified the term $\sum_{i\in\mathcal{N}}\frac{c_wq_i|T_s-Z_{fi}|}{-aT_s+b}$ in~(\ref{equ:opt}a) to be $\sum_{i\in\mathcal{N}}\frac{c_w^2q_i^2(T_s-Z_{fi})^2}{-aT_s+b}$ in~(\ref{equ:optapp}a). The latter term is actually an upper bound of the square of the exact energy consumption: using a Cauchy-Schwarz inequality, we have
\begin{align}
\sum_{i\in\mathcal{N}}\frac{c_w^2q_i^2(T_s-Z_{fi})^2}{-aT_s+b}\ge&\frac{(\sum_{i\in\mathcal{N}}c_wq_i|T_s-Z_{fi}|)^2}{|\mathcal{N}|(-aT_s+b)} \nonumber\\
\ge&\frac{b-aT_s}{|\mathcal{N}|}\frac{(\sum_{i\in\mathcal{N}}c_wq_i|T_s-Z_{fi}|)^2}{(-aT_s+b)^2} \nonumber\\
\ge&\frac{b-aT_s^{max}}{|\mathcal{N}|}\frac{(\sum_{i\in\mathcal{N}}c_wq_i|T_s-Z_{fi}|)^2}{(-aT_s+b)^2} \nonumber
\end{align}
where $|\mathcal{N}|$ is the number of zones in the building. Therefore, rather than minimizing the total consumption directly, the objective here aims to minimize its upper bound, which is sufficient for energy saving purpose.

Now we show how~(\ref{equ:optapp}) can be turned convex. Again, we introduce variables $u_i=c_wq_i(T_s-Z_{fi}), i\in\mathcal{N}$ to get
\begin{subequations}\label{equ:optappre}
\begin{gather}
\min_{Z_i, u_i, T_s, Z_{fi}} \sum_{i\in\mathcal{N}}\left[\frac{1}{2}r_i(Z_i-T_i^{set})^2+s\frac{u_i^2}{-aT_s+b}\right] \\
\text{s. t. ~(\ref{equ:optnoTs}b)-(\ref{equ:optnoTs}d) and~(\ref{equ:opt}e).}
\end{gather}
\end{subequations}
Note that (i) the function $E=\frac{\sum_{i\in\mathcal{N}}u_i^2}{-aT_s+b}$ is convex in $u_i, T_s$ as its Hessian matrix equals
\begin{align}
\left[ {\begin{array}{*{20}{c}}
{\frac{2}{{b - a{T_s}}}}&0&0& \vdots \\
0& \ddots &0&{\frac{{2a{u_i}}}{{{{(b - a{T_s})}^2}}}}\\
0&0&{\frac{2}{{b - a{T_s}}}}& \vdots \\
 \cdots &{\frac{{2a{u_i}}}{{{{(b - a{T_s})}^2}}}}& \cdots &{\frac{{2{a^2}\sum_{i\in\mathcal{N}} {u_i^2} }}{{{{(b - a{T_s})}^3}}}}
\end{array}} \right] \nonumber
\end{align}
which is positive semi-definite and (ii) once the mode is determined, the signs of $u_i$ and $T_s-Z_{fi}$ are determined so that constraint~(\ref{equ:optnoTs}d) can become linear by multiplying $c_w(T_s-Z_{fi})$ on both sides. Next we design a distributed algorithm to solve the convex optimization problem~(\ref{equ:optappre}).

\subsection{A distributed algorithm}
Similar to Section~\ref{se:flowonly}-B, \emph{we consider the heating mode case}. The design methodology is motivated by a modified primal-dual gradient method~\cite{ZhaP15tac,ZhaLP15cdc}. For simplicity, we directly write down the resulting distributed algorithm:
\begin{subequations}\label{equ:controllerapp}
\begin{align}
\dot Z_i=&k_{Z_i}\Big(r_i(T_i^{set}-Z_i)+\zeta_i\Big(\frac{1}{R_i}+\frac{1}{R_{afi}}+\sum_{j\in\mathcal{N}(i)}\frac{1}{R_{ij}}\Big) \nonumber\\
&-\sum_{j\in\mathcal{N}(i)}\frac{\zeta_j}{R_{ij}}-\frac{\lambda_i}{R_{afi}} \Big) \\
\dot u_i=&k_{u_i}\Big(\frac{2su_i}{aT_s-b}-\lambda_i-\mu_i^++\mu_i^- \Big) \\
\dot T_s=&k_{T_s}\Big(-\frac{as\sum_{i\in\mathcal{N}}u_i^2}{(b-aT_s)^2}+\sum_{i\in\mathcal{N}}\mu_i^+q_i^{max}c_w-\nu^++\nu^- \Big) \\
\dot Z_{fi}=&k_{Z_{fi}}\Big(\frac{\lambda_i-\zeta_i}{R_{afi}}-\mu_i^+q_i^{max}c_w+k_{eZ_{fi}}(\hat Z_{fi}-Z_{fi}) \Big) \\
\dot{\hat Z}_{fi}=&\hat k_{eZ_{fi}}(Z_{fi}-\hat Z_{fi}) \\
\dot\zeta_i=&k_{\zeta_i}\Big(\frac{T^o-Z_i}{R_i}+\sum_{j\in\mathcal{N}(i)}\frac{Z_j-Z_i}{R_{ij}}+\frac{Z_{fi}-Z_i}{R_{afi}}+Q_i\Big) \\
\dot\lambda_i=&k_{\lambda_i}\Big(\frac{Z_i-Z_{fi}}{R_{afi}}+u_i\Big) \\
\dot\mu_i^+=&k_{\mu_i^+}(u_i-q_i^{max}c_w(T_s-Z_{fi}))_{\mu_i^+}^+ \\
\dot\mu_i^-=&k_{\mu_i^-}(-u_i)_{\mu_i^-}^+ \\
\dot\nu^+=&k_{\nu^+}(T_s-T_s^{max})_{\nu^+}^+ \\
\dot\nu^-=&k_{\nu^-}(T_s^{min}-T_s)_{\nu^-}^+
\end{align}
\end{subequations}
where $i\in\mathcal{N}$, $k_{Z_i}, k_{u_i}, k_{T_s}, k_{Z_{fi}}, k_{eZ_{fi}}, \hat k_{eZ_{fi}}, k_{\zeta_i}, k_{\lambda_i}, k_{\mu_i^+},$ $k_{\mu_i^-}, k_{\nu^+}, k_{\nu^-}$ are positive scalars representing the controller gains, and we have introduced the auxiliary state $\hat Z_{fi}$ to improve the performance of the algorithm (since $u_i$ and $T_s$ are coupled in $E$ whose Hessian matrix is not always zero, only adding $\hat Z_{fi}$ is enough for behavior enhancement). Now using~(\ref{equ:controllerqi}) and~(\ref{equ:controllerapp}c) as the control input to~(\ref{equ:thermalmodel})-(\ref{equ:floormodel}) where $u_i, T_s, Z_{fi}$ are determined by~(\ref{equ:controllerapp}), we can naturally obtain a real-time distributed controller to regulate~(\ref{equ:thermalmodel})-(\ref{equ:floormodel}) to a steady state which is the optimal solution to~(\ref{equ:optapp}).

\begin{thm}\label{thm:2}
Given constant/step change/slow-varying $T^o,$ $Q_i$ (remark that they vary at a time-scale of minutes), each trajectory of the overall system~(\ref{equ:thermalmodel})-(\ref{equ:floormodel}),~(\ref{equ:controllerqi}) and~(\ref{equ:controllerapp}) asymptotically converges to an equilibrium point at which $T_i, q_i,T_s, T_{fi}$ of this point is the optimal solution of~(\ref{equ:optapp}).
\end{thm}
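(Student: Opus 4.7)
The plan is to mirror the architecture of the proof of Theorem~\ref{thm:1}, decomposed into three conceptual layers: (i) show that the controller~(\ref{equ:controllerapp}), viewed as an autonomous primal-dual flow, asymptotically drives its state to an equilibrium coinciding with a KKT point of the reformulated convex problem~(\ref{equ:optappre}); (ii) show that the inverse state transformation~(\ref{equ:controllerqi}) together with the $T_s$-component recovers an optimizer of~(\ref{equ:optapp}); (iii) close the cascade loop with the plant~(\ref{equ:thermalmodel})-(\ref{equ:floormodel}) and use Proposition~1 to conclude that the overall trajectory converges and that at the joint equilibrium $Z_i=T_i$ and $Z_{fi}=T_{fi}$.

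For step (i), I would first verify that the stationary points of~(\ref{equ:controllerapp}) are exactly the KKT points of~(\ref{equ:optappre}). Setting all right-hand sides to zero reproduces the stationarity conditions in $Z_i, u_i, T_s, Z_{fi}$, the equality-constraint feasibility from $\dot\zeta_i=\dot\lambda_i=0$, and the complementary-slackness/dual-feasibility conditions encoded by the positive-projection dynamics of $\mu_i^\pm,\nu^\pm$. Since the Hessian displayed in the excerpt is positive semi-definite, the equality constraints~(\ref{equ:optnoTs}b)-(\ref{equ:optnoTs}c) are linear, and~(\ref{equ:optnoTs}d) becomes linear once the heating mode fixes the sign of $T_s-Z_{fi}$, problem~(\ref{equ:optappre}) is convex and inherits Slater's condition from~(\ref{equ:opt}); hence KKT is necessary and sufficient.

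Convergence in step (i) is the principal technical content. I would invoke the modified primal-dual gradient convergence results of~\cite{ZhaLP15cdc,CheMC16}. The objective is strictly convex only in $Z_i$, merely jointly convex in $(u_i,T_s)$, and constant in $Z_{fi}$; this is precisely the degenerate situation that motivates the auxiliary state. The regularization $k_{eZ_{fi}}(\hat Z_{fi}-Z_{fi})$ together with the clean-up dynamics~(\ref{equ:controllerapp}e) eliminates the flat direction along $Z_{fi}$ without changing equilibria, in the spirit of~\cite{ZhaLP15cdc}. I would then construct a Lyapunov candidate combining the Bregman distance of the primal variables, the squared distance of the duals to their optimal values, and the gap $(Z_{fi}-\hat Z_{fi})^2$, and invoke LaSalle to conclude that trajectories enter the largest invariant set contained in the stationarity locus, which is the KKT set. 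The \textbf{main obstacle} is verifying rigorously that a single auxiliary state on $Z_{fi}$ suffices: the coupling of $u_i$ and $T_s$ inside $E=\sum_i u_i^2/(b-aT_s)$ is nontrivial, and one must confirm, as the excerpt asserts parenthetically, that the Hessian is nonvanishing along the $(u_i,T_s)$ directions so that LaSalle pins down a unique primal optimum without needing $\hat u_i$ or $\hat T_s$.

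For steps (ii)-(iii), the transformation~(\ref{equ:controllerqi}) is a diffeomorphism on the operating domain (where $T_s\neq Z_{fi}$ is enforced by the heating mode), so the controller equilibrium maps to an optimum of~(\ref{equ:optapp}). The overall system is a cascade: the controller produces asymptotically constant $q_i,T_s$, and Proposition~1 guarantees that the plant, being Hurwitz for any admissible fixed input, tracks a unique equilibrium determined by $(T^o,Q_i,q_i,T_s)$. Since the plant's steady-state equations are exactly~(\ref{equ:opt}b)-(\ref{equ:opt}c), which the controller variables also satisfy at equilibrium, uniqueness yields $T_i=Z_i$ and $T_{fi}=Z_{fi}$, giving the claimed optimality. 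The ``step-change/slow-varying'' case then follows by a standard time-scale-separation argument, since the controller operates on the order of seconds while $T^o,Q_i$ vary on the order of minutes, so each inter-change interval can be treated as the constant case.
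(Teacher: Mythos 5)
Your proposal is correct and follows essentially the same route as the paper: the paper's own proof of Theorem~\ref{thm:2} simply states that it is analogous to that of Theorem~\ref{thm:1}, i.e., convergence of the modified primal-dual flow~(\ref{equ:controllerapp}) to a KKT point of the convex reformulation~(\ref{equ:optappre}) via~\cite{ZhaLP15cdc,CheMC16}, followed by the state transformation~(\ref{equ:controllerqi}) and the cascade argument with Proposition~1 to identify $T_i=Z_i$, $T_{fi}=Z_{fi}$ at steady state. Your additional discussion of the Lyapunov/LaSalle construction and of why a single auxiliary state $\hat Z_{fi}$ suffices is more detailed than what the paper provides, but it is consistent with the paper's parenthetical justification and does not change the approach.
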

\begin{proof}
The proof is similar to that of Theorem 1, and thus, is omitted for brevity.
\end{proof}

In Equation~(\ref{equ:controllerapp}f), the disturbances $T^o, Q_i$ appear. Similar to Section~\ref{se:flowonly}-B, to make the algorithm implementable without measuring these terms, we introduce $\tilde\zeta_i=\frac{\zeta_i}{k_{\zeta_i}}-C_iT_i$ whose dynamics is given by~(\ref{equ:controllerplus}). Moreover, we substitute $\zeta_i=k_{\zeta_i}(\tilde\zeta_i+C_iT_i)$ into~(\ref{equ:controllerapp}a),~(\ref{equ:controllerapp}d) to eliminate $\zeta_i$. Now the proposed control algorithm~(\ref{equ:controllerapp}a)-(\ref{equ:controllerapp}e),~(\ref{equ:controllerapp}g)-(\ref{equ:controllerapp}k) and~(\ref{equ:controllerqi})-(\ref{equ:controllerplus}) is completely distributed as shown in Figure~\ref{fig:3} and can be implemented as follow. Given $C_i, R_i, R_{ij}, R_{afi}, r_i, s, a, b, q_i^{max}$, each zone in the building collects $T_i^{set}$ from users, locally measures its indoor temperature $T_i$ and floor temperature $T_{fi}$, receives the feedback signals $\zeta_j=k_{\zeta_j}(\tilde\zeta_j+C_jT_j), T_j-Z_j$ from its neighboring zones and $T_s$ from the compressor, and then uses the information to update $Z_i, u_i, Z_{fi}, \hat Z_{fi}, \tilde\zeta_i,\lambda_i, \mu_i^+, \mu_i^-, q_i$. On the other hand, given $T_s^{min}, T_s^{max}$, the compressor receives the feedback signals $u_i, \mu_i^+q_i^{max}c_w$ from each zone, updates $T_s, \nu^+, \nu^-$, and then broadcasts $T_s$. Here $C_i, R_i, R_{ij}, R_{afi}, a, b, q_i^{max}, T_s^{min}, T_s^{max}$ are building parameters, and $r_i, s$ are specified by users.

\begin{figure}[!t]
\centering
\includegraphics[width=0.47\textwidth]{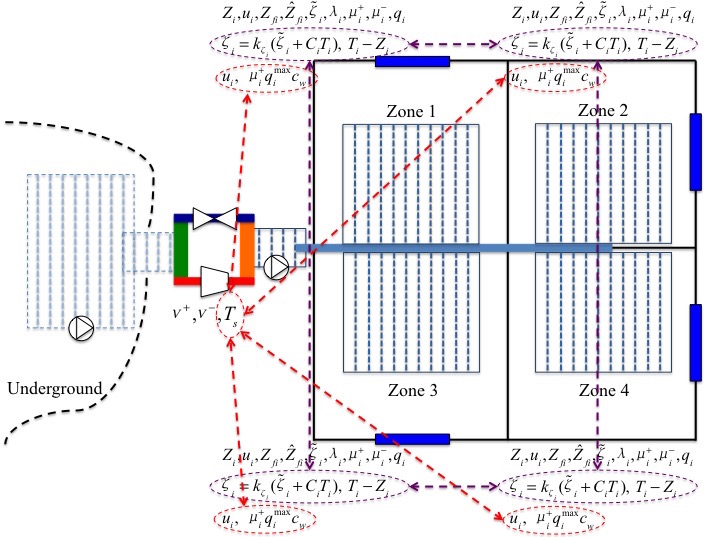}
\caption{Information exchange of the distributed controller.}
\label{fig:3}
\end{figure}

\begin{rem}
Similar to Remark 2, the term $\sum_{j\in\mathcal{N}(i)}\frac{T_j-T_i}{R_{ij}}$ could be ignored in~(\ref{equ:thermalmodel}) as well as the term $\sum_{j\in\mathcal{N}(i)}\frac{Z_j-Z_i}{R_{ij}}$ in~(\ref{equ:optapp}). Then following the same design procedure, we obtain another distributed control scheme (given by setting $R_{ij}=\infty$ in~(\ref{equ:controllerapp}a) and~(\ref{equ:controllerplus})) that does not need communication between neighboring zones, i.e., it requires less communication.
\end{rem}

\section{Numerical Investigations}\label{se:simulation}
In this section, we present two numerical examples for scenarios described in Sections~\ref{se:flowonly} and~\ref{se:general} respectively, using a house with four adjacent zones as illustrated in Figure~\ref{fig:1}. Only the heating case is presented in the following, while the cooling case is similar under the proposed control schemes.

The parameters of the simulations are obtained from~\cite{TahSMR11,DerFS11,LiWB16}: all $C_i=20$kJ$\text{/}^{\circ}$C, all $C_{fi}=35$kJ$\text{/}^{\circ}$C, all $C_{wi}=25$kJ$\text{/}^{\circ}$C, all $R_i=15^{\circ}$C/kW, all $R_{ij}=23^{\circ}$C/kW, all $R_{afi}=3^{\circ}$C/kW, all $R_{fwi}=5^{\circ}$C/kW, $c_w=4.186$kJ/kg$\text{/}^{\circ}$C, $[q_i^{max}]=[0.03, 0.04, 0.045, 0.035]$kg/s, $[T_s^{min},T_s^{max}]=[38,42]^{\circ}$C, $a=0.11\text{/}^{\circ}$C, $b=8.4$, all $r_i=0.5$p.u., all $k_{Z_i}=0.025$p.u., all $k_{Z_{fi}}=0.033$p.u., $k_{T_s}=0.05$p.u., all $k_{u_i}=\hat k_{eZ_{fi}}=k_{\zeta_i}=k_{\lambda_i}=k_{\mu_i^+}=k_{\mu_i^-}=k_{\nu^+}=k_{\nu^-}=1$p.u., all $k_{eu_i}=10$p.u., all $\hat k_{eu_i}=0.1$p.u., and all $k_{eZ_{fi}}=2$p.u. (p.u. means per unit). The outdoor temperature, indoor heat gains, and supply temperature in Scenario I are shown in Figure~\ref{fig:disturb}.

The simulation result of the first scenario is illustrated in Figures~\ref{fig:tempnoTs}-\ref{fig:ratenoTs}, in which we set $s=0$ before $15$h and $s=10$p.u. thereafter. The curves labelled with ``app'' indicate the case of using the decentralized controller given in Remark 2, i.e., a communication free scheme. It can be seen that the difference between these two cases is not large, i.e., less than $1.3^{\circ}$C in temperatures, indicating that the performance of the decentralized controller could be acceptable in practice. Before $15$h, since there is no consumption reduction purpose, i.e., $s=0$, the temperature trajectories under~(\ref{equ:controller}a)-(\ref{equ:controller}e),~(\ref{equ:controller}g)-(\ref{equ:controller}i) and~(\ref{equ:controllerqi})-(\ref{equ:controllerplus}) track their set points unless the corresponding water flow rate saturates (although not shown here, using a PID controller will result in the same temperature trajectories during this period). After $15$h, deviations from temperature set points appear due to the consideration of energy saving (while only using a PID controller can not reduce energy consumption unless forcing users to change their set points). On the other hand, we compare the flow rate response under the distributed controller, with and without extra dynamics. The result shown in Figure~\ref{fig:ratecompnoTs} demonstrates that after introducing those extra dynamics, the system performance has been improved that the oscillations are largely attenuated.

In the second scenario, $s=1$p.u. holds before $13$h and $s=5$p.u. thereafter. We can see that the difference between using the distributed controller~(\ref{equ:controllerapp}a)-(\ref{equ:controllerapp}e),~(\ref{equ:controllerapp}g)-(\ref{equ:controllerapp}k) and~(\ref{equ:controllerqi})-(\ref{equ:controllerplus}) and using its simplified version given in Remark 3 is also not large, as shown in Figures~\ref{fig:tempge}-\ref{fig:Tsge}. The temperature deviations with respect to their set points before increasing the weight coefficient $s$ are smaller than those thereafter since starting from $13$h, energy saving becomes more important while user comfort becomes less. All these two scenarios inspire us that tuning the weight coefficient $s$ (or equivalently $r_i$, as the optimal solution of~(\ref{equ:opt})/(\ref{equ:optapp}) depends on the ratio $r_i/s, i\in\mathcal{N}$) can balance user comfort and GHP system energy consumption -- there always exists a tradeoff between user comfort and energy saving.

\begin{figure}[!t]
\centering
\includegraphics[width=0.5\textwidth]{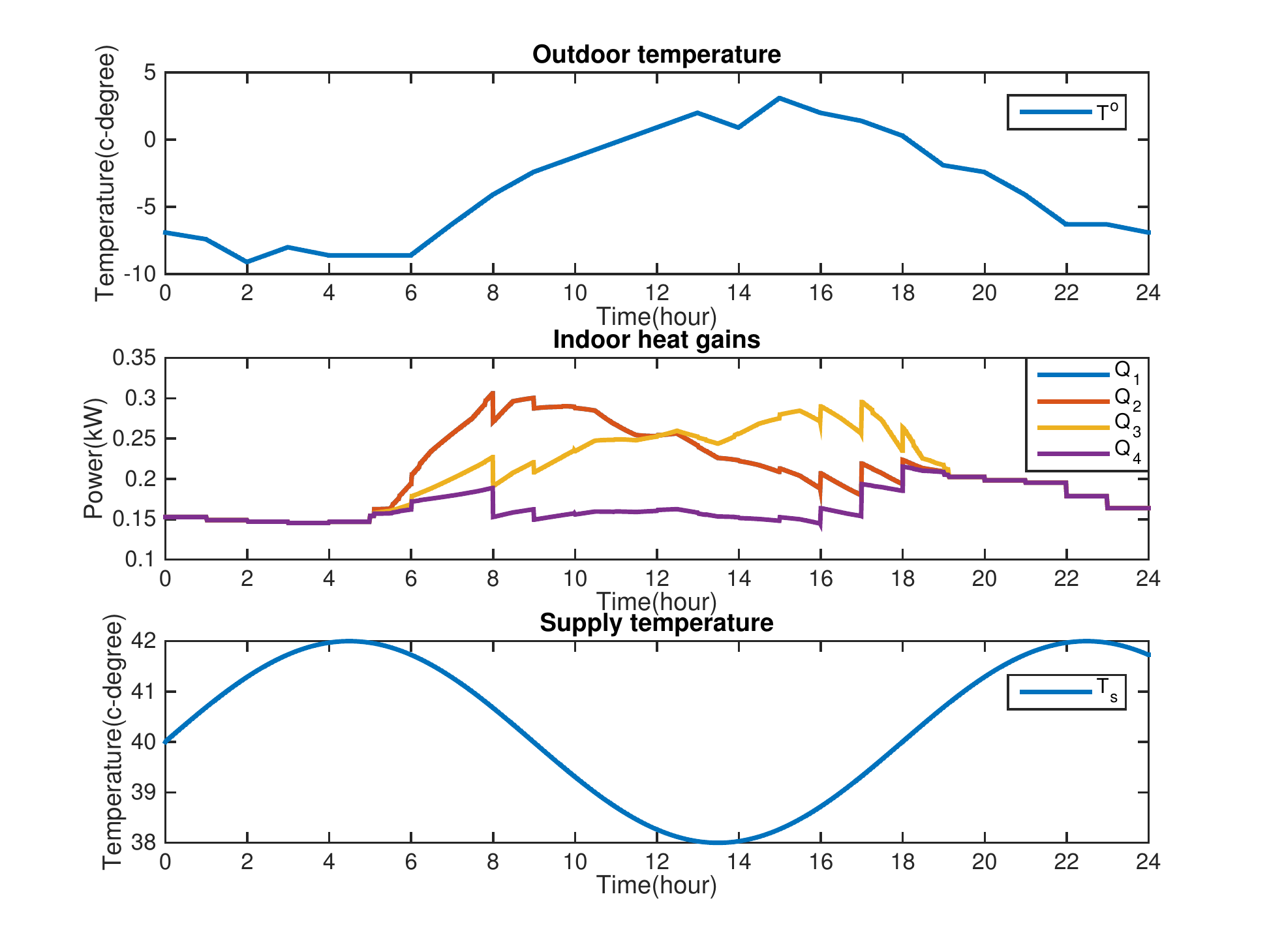}
\caption{Profiles of the exogenous inputs ($Q_1=Q_2$).}
\label{fig:disturb}
\end{figure}

\begin{figure}[!t]
\centering
\includegraphics[width=0.5\textwidth]{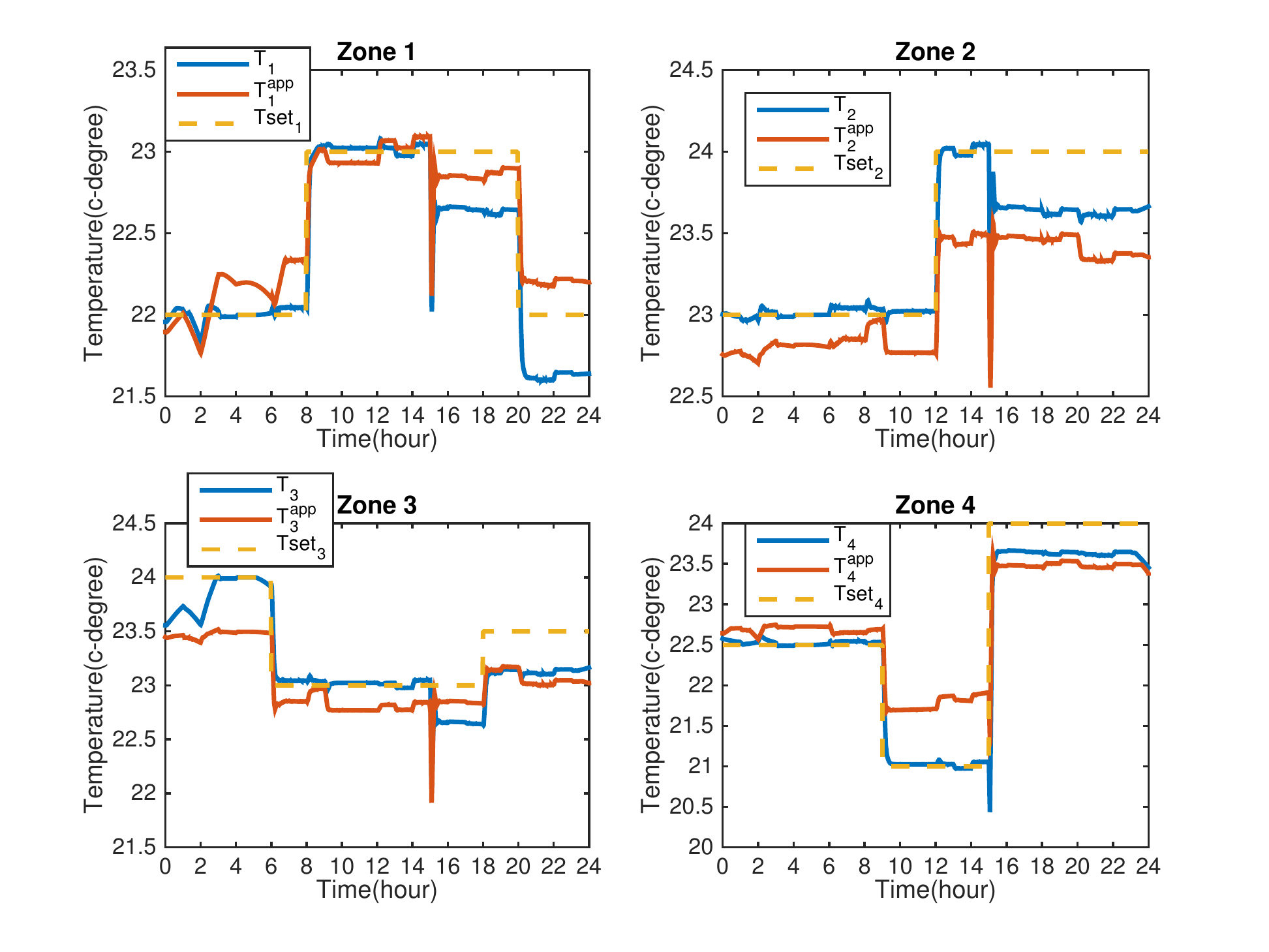}
\caption{Temperatures in Scenario I under~(\ref{equ:controller}a)-(\ref{equ:controller}e),~(\ref{equ:controller}g)-(\ref{equ:controller}i) and~(\ref{equ:controllerqi})-(\ref{equ:controllerplus}): curves labeled with ``app'' indicate the case of using the decentralized controller given in Remark 2.}
\label{fig:tempnoTs}
\end{figure}

\begin{figure}[!t]
\centering
\includegraphics[width=0.5\textwidth]{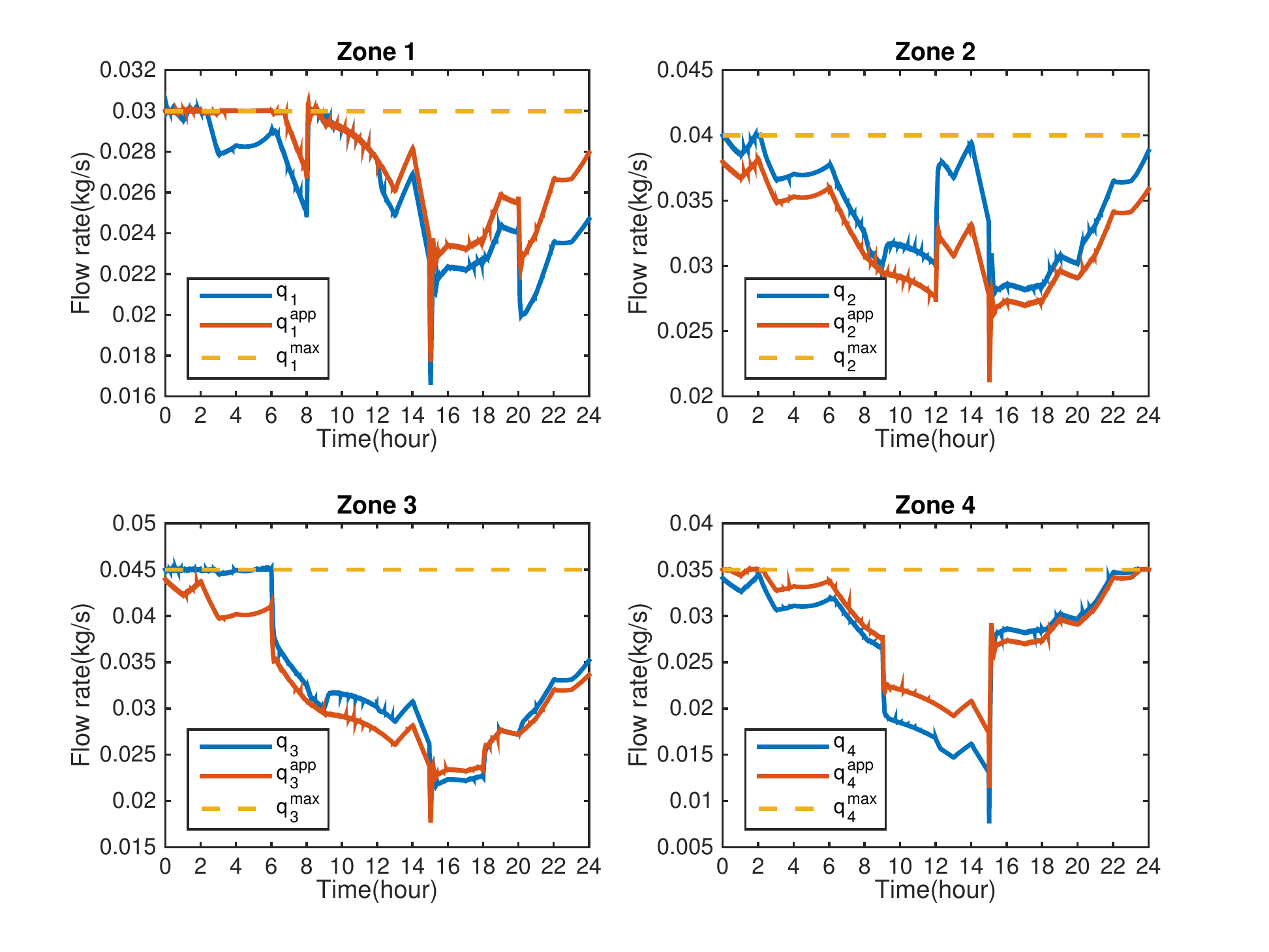}
\caption{Flow rates in Scenario I.}
\label{fig:ratenoTs}
\end{figure}

\begin{figure}[!t]
\centering
\includegraphics[width=0.5\textwidth]{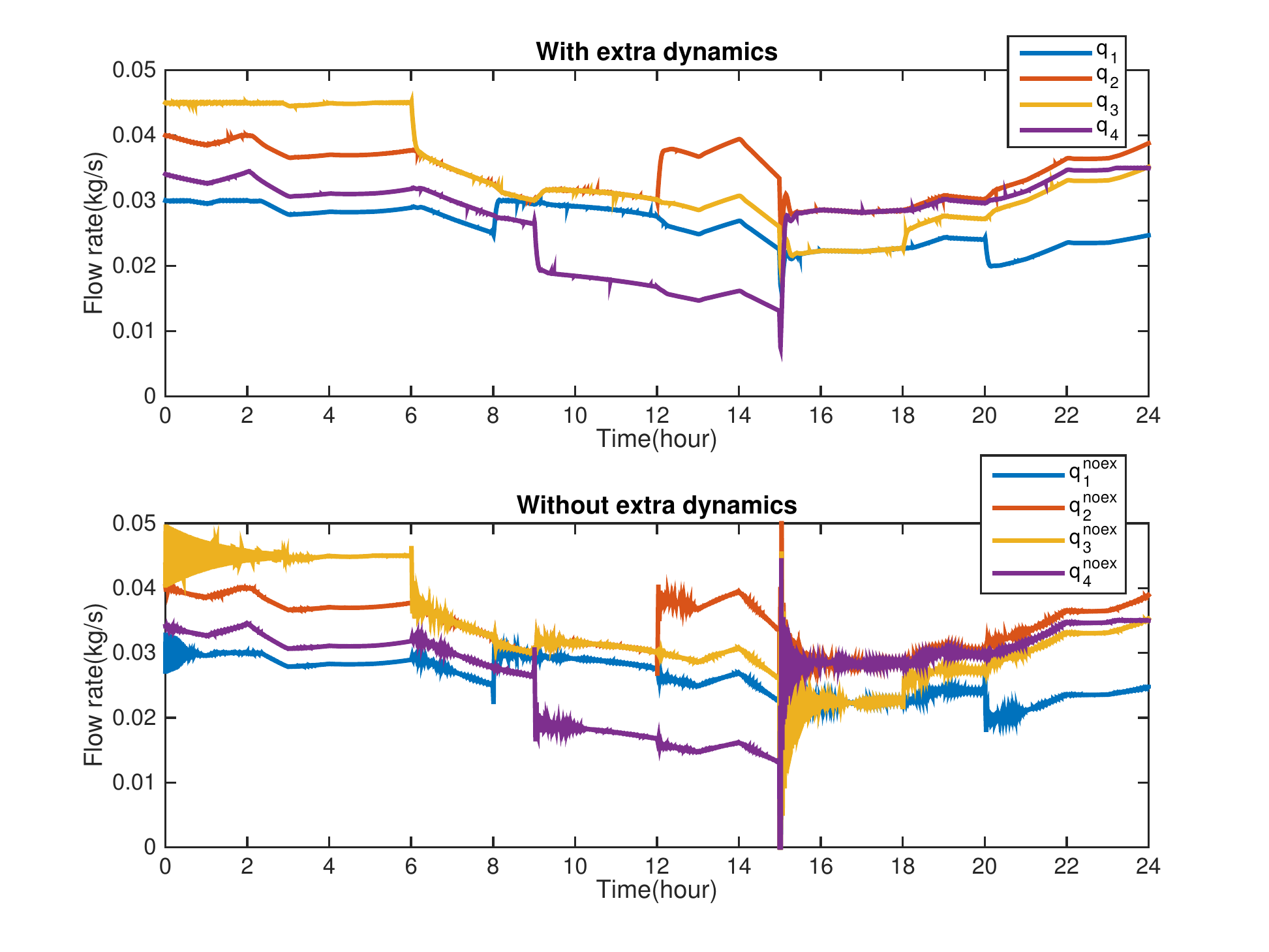}
\caption{Flow rates under~(\ref{equ:controller}a)-(\ref{equ:controller}e),~(\ref{equ:controller}g)-(\ref{equ:controller}i) and~(\ref{equ:controllerqi})-(\ref{equ:controllerplus}): with and without ($k_{eu_i}=k_{eZ_{fi}}=0$) extra dynamics.}
\label{fig:ratecompnoTs}
\end{figure}

\begin{figure}[!t]
\centering
\includegraphics[width=0.5\textwidth]{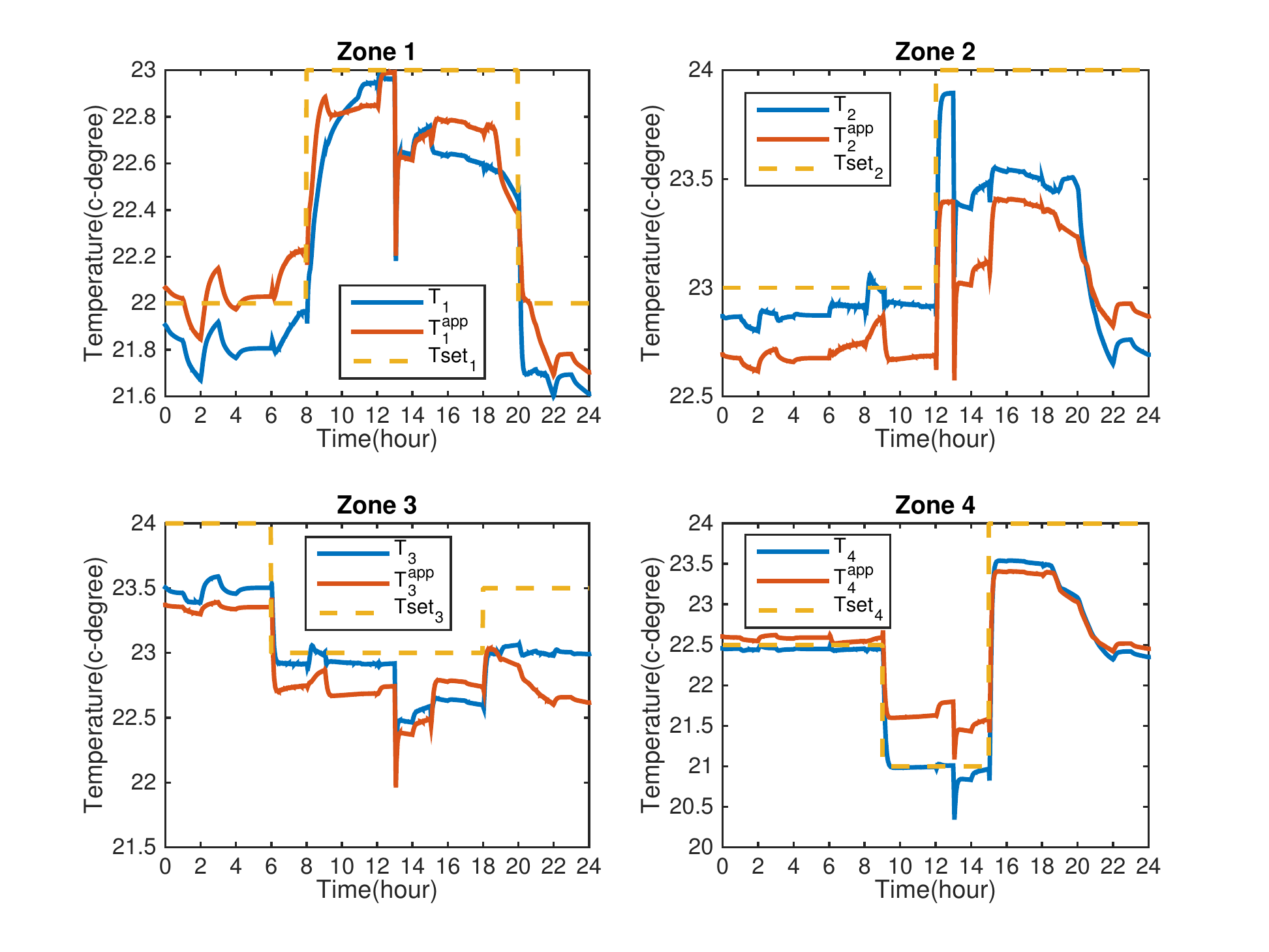}
\caption{Temperatures in Scenario II under~(\ref{equ:controllerapp}a)-(\ref{equ:controllerapp}e),~(\ref{equ:controllerapp}g)-(\ref{equ:controllerapp}k) and~(\ref{equ:controllerqi})-(\ref{equ:controllerplus}): curves labeled with ``app'' indicate the case of using the distributed controller given in Remark 3.}
\label{fig:tempge}
\end{figure}

\begin{figure}[!t]
\centering
\includegraphics[width=0.5\textwidth]{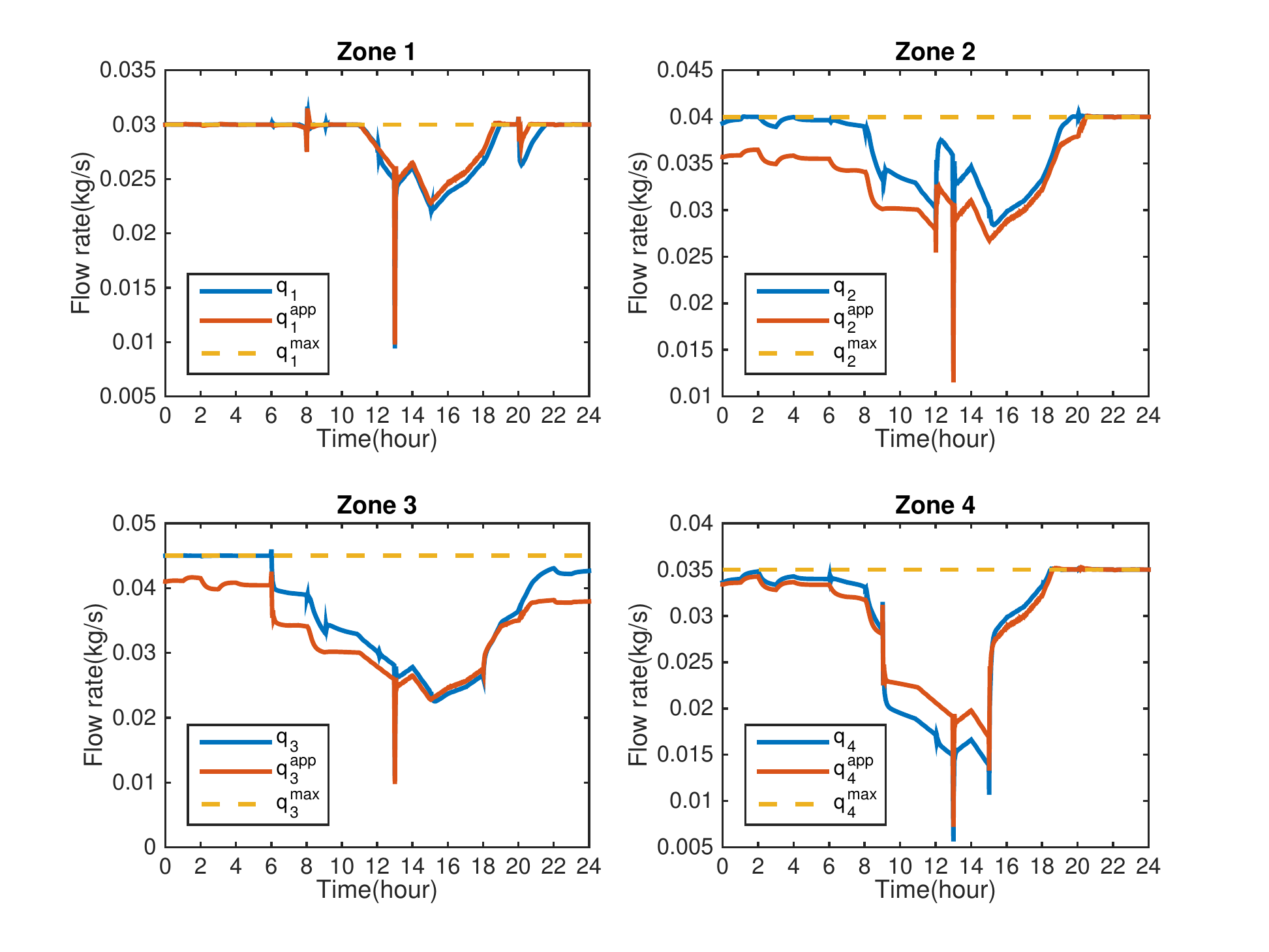}
\caption{Flow rates in Scenario II.}
\label{fig:ratege}
\end{figure}

\begin{figure}[!t]
\centering
\includegraphics[width=0.429\textwidth]{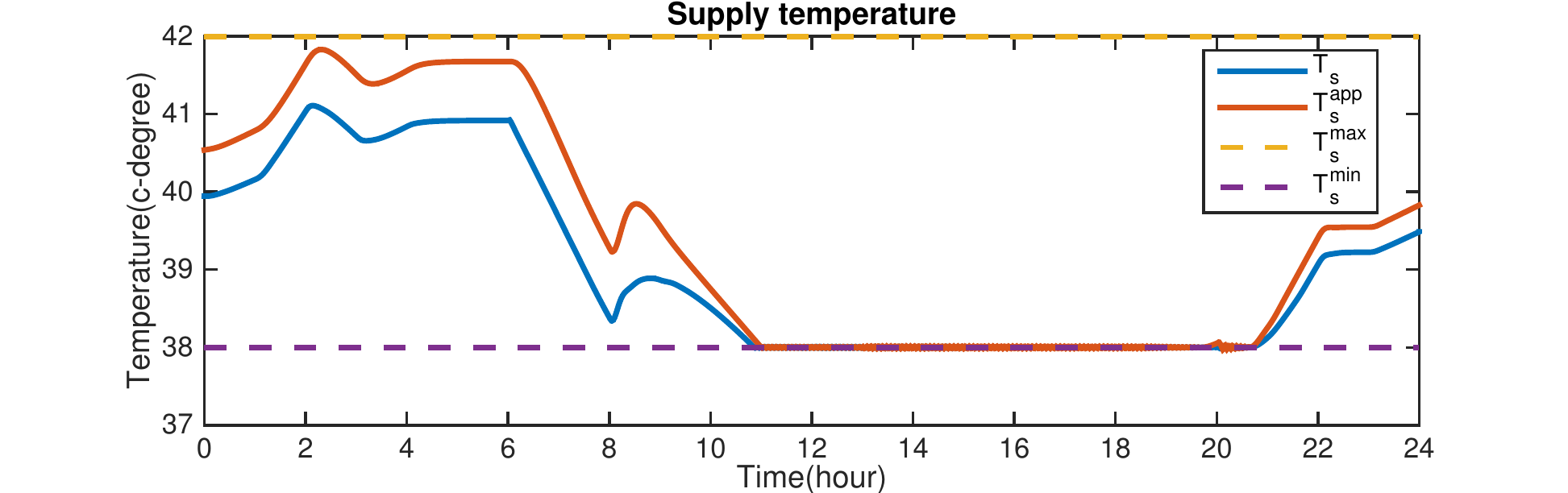}
\caption{Supply temperature in Scenario II.}
\label{fig:Tsge}
\end{figure}

\section{Conclusion and Future Work}\label{se:conclusion}
This paper presents distributed control frameworks on real-time temperature regulation via GHP combined with floor heating/cooling systems in energy efficient buildings. The controllers regulate water flow rates as well as the heat pump supply temperature, which balance user comfort and energy saving. Moreover, they can automatically adapt to changes of disturbances such as the outdoor temperature and indoor heat gains, without measuring or predicting those values. Also, the implementation of the controllers are simple.

Future work includes: developing distributed/decentralized control schemes to drive system~(\ref{equ:thermalmodel})-(\ref{equ:floormodel}) to the exact optimal solution of problem~(\ref{equ:opt}) (this may require an exact convex relaxation approach to solve~(\ref{equ:opt})); considering both floor and radiator heating/cooling (this may require model reduction for the dynamics of hydronic radiators); and investigating, e.g., the $H_2$ and $H_{\infty}$ performances of the controlled systems.

\bibliographystyle{IEEEtran}
\bibliography{zxbb}
\bibliographystyle{choosenstyle}

\end{spacing}

\end{document}